\documentclass[letterpaper, 10 pt, conference]{ieeeconf}  

\IEEEoverridecommandlockouts                              
                                                          
\overrideIEEEmargins                                      



\usepackage{graphicx} 
\usepackage{times} 
\usepackage{amsmath,amsfonts} 
\usepackage{amssymb}  
\usepackage{physics}
\usepackage{caption}
\usepackage{subcaption}

\usepackage{amsmath} 
 
\usepackage{amsthm}
\usepackage{amssymb}

\usepackage{xcolor}
\newtheorem{thm}{Theorem}
\newtheorem{lem}[thm]{Lemma}
\newtheorem{defi}[thm]{Definition}

\allowdisplaybreaks
\title{\LARGE \bf
Event-Triggered Boundary Control of Mixed-Autonomy Traffic}

\author{Yihuai Zhang, Huan Yu$^{*}$%
\thanks{Yihuai Zhang and Huan Yu are with the Hong Kong University of Science and Technology (Guangzhou), Thrust of Intelligent Transportation, Guangzhou, Guangdong, China. Huan Yu is also affiliated with the Hong Kong University of Science and Technology, Department of Civil and Environmental Engineering, Hong Kong SAR, China.}%
\thanks{$^{*}$ corresponding author (huanyu@ust.hk). }}

\begin{document}
\bibliographystyle{abbrv}

\maketitle
\thispagestyle{empty}
\pagestyle{empty}

\begin{abstract}
Control problems of mixed-autonomy traffic systems that consist of both human-driven vehicles (HV) and autonomous vehicles (AV), have gained increasing attention. This paper focuses on suppressing traffic oscillations in the mixed-autonomy traffic system using boundary control design. The mixed traffic dynamics are described by $4\times 4$ hyperbolic partial differential equations (PDEs), governing the propagation of four waves of traffic, including the density of HV, the density of AV, the friction between the two vehicle classes from driving interactions and the averaged velocity. We propose an event-triggered boundary control design since control signals of the traffic light on ramp or the varying speed limit cannot be continuously updated. We apply the event-triggered mechanism for a PDE backstepping controller and obtain a dynamic triggering condition. Lyapunov analysis is performed to prove the exponential stability of the closed-loop system with the event-triggered controller. Numerical simulation demonstrates the efficiency of the proposed event-trigger control design. We analyzed how the car-following spacing of AV affects the event-triggering mechanism of the control input in mixed-autonomy traffic.

\end{abstract}
\section{Introduction}
Various boundary control designs have been studied for the suppression of freeway traffic congestion~\cite{siri2021freeway}. The practical implementation of boundary control signals is achieved through traffic lights on ramps and variable speed limits (VSL). With the rapid development of autonomous driving technology, the penetration of autonomous vehicles (AV) in traffic has increased over the years, resulting in a mixed-autonomy traffic system consisting of both human-driven vehicles (HV) and AV. Traffic oscillations may arise from car-following and lane-changing interactions between AV and HV. The design of boundary control strategies for mixed-autonomy traffic remains an open question.

Event-based control is a computer control strategy aimed at improving system efficiency by updating the controller aperiodically. It requires defining a triggering condition for determining the time instant at which the controller needs to be updated. The triggering condition can be static or dynamic~\cite{girard2014dynamic,tabuada2007event,wang2024event}. Event-triggered control(ETC) for hyperbolic PDEs was first developed by Espitia~\cite{espitia2016event}. Initially, a continuous control input that can stabilize the system is designed, with an event-triggered mechanism embedded within the system. The event-triggered mechanism operates based on the system states and the state that describes the dynamics of the mechanism. The event-triggered control input, which also stabilizes the system, is then obtained through this mechanism. In traffic applications, Yu~\cite{yu2019traffic} first presented results for stabilizing traffic oscillations using the backstepping method, providing both theoretical guarantees and practical application potential. The author in~\cite{espitia2022traffic} developed an event-triggered output-feedback controller for cascaded roads. Event-triggered control has garnered interest due to its efficient use of communication and computational resources by updating the control law periodically.

The traffic dynamics for pure HV traffic are described by hyperbolic PDEs, such as the second-order Aw-Rascle-Zhang (ARZ) model~\cite{aw2000resurrection,zhang2002non}. Backstepping boundary control has been developed for the ARZ model~\cite{yu2019traffic,yu2022simultaneous,espitia2022traffic}. In addition to the backstepping control method, feedback control~\cite{karafyllis2019feedback,zhang2017stochastic} and optimal control~\cite{goatin2016speed,gomes2006optimal} can also be applied for boundary stabilization of traffic PDE models. Boundary control input is implemented by manipulating the red-green phase of traffic lights on ramps and the velocity display of VSL. Continuous control input needs to be updated periodically at each time step, which poses challenges for practical implementation. Some studies have developed ramp metering controllers for discrete-time traffic systems and designed discrete control laws. In~\cite{pasquale2020hierarchical}, a hierarchical centralized/decentralized event-triggered control method was proposed to reduce computation and communication load. Ferrara~\cite{ferrara2015event} introduced event-triggered model predictive schemes for discrete models of freeway traffic control. However, previous efforts to mitigate traffic congestion have primarily focused on traffic consisting solely of HV. For mixed-autonomy traffic, the interactive driving behaviors of AV and HV make boundary control problems more complex. In this paper, we propose an event-triggered control method for the mixed-autonomy traffic system.

The main contributions of this paper are twofold: we propose the first event-triggered controller for a mixed-autonomy traffic system modeled by an extended ARZ model and provide a theoretical guarantee through Lyapunov analysis. For application relevance, the results can be applied to traffic management systems to reduce computational resources and improve the overall efficiency of traffic operations. This work paves the way for deployment of advanced traffic management strategies.

The paper is organized as follows. Section~\ref{sec2} introduces the mixed-autonomy traffic system using the extended ARZ model. In Section~\ref{sec3}, the boundary control model is derived, and the backstepping controller in its continuous form is proposed. In Section~\ref{sec4}, the event-triggered boundary controller is developed, and Lyapunov analysis is performed to the closed-loop system. Section~\ref{sec5} presents the numerical simulation results, while Section~\ref{sec6} concludes the paper.

\section{Mixed-autonomy Traffic PDE Model}\label{sec2}
Motivated by the two-class extended ARZ PDE model~\cite{burkhardt2021stop}, the mixed-autonomy traffic consisting of HV and AV is proposed as
\begin{align}
    \partial_t \rho_{\rm h}+\partial_x\left(\rho_{\rm h} v_{\rm h}\right) &=0, \\
    \partial_t\left(v_{\rm h}-V_{e,\rm h}\right)+v_{\rm h} \partial_x\left(v_{\rm h} -V_{e,\rm h} \right)&=\frac{V_{e, \rm h}-v_{\rm h}}{\tau_{\rm h}}, \\
    \partial_t \rho_{\rm a}+\partial_x\left(\rho_{\rm a} v_{\rm a}\right) &=0,\\
    \partial_t\left(v_{\rm a}-V_{e, \rm a}\right)+v_{\rm a} \partial_x \left(v_{\rm a}-V_{e, \rm a}\right) &=\frac{V_{e, \rm a}-v_{\rm a}}{\tau_{\rm a}},
\end{align}
${\rho}_{\rm h}(x,t)$ and ${\rho}_{\rm a}(x,t)$ are the traffic densities of HV and AV, ${v}_{\rm h}(x,t)$, ${v}_{\rm a}(x,t)$ are the traffic velocities of HV and AV. The boundary conditions are set as
\begin{align}
    \rho_{\rm h}(0, t)  =\rho_{\rm h}^\star,
    \rho_{\rm a}(0, t) & =\rho_{\rm a}^\star, \\
    \rho_{\rm h}(0, t) v_{\rm h}(0, t)+\rho_{\rm a}(0, t) v_{\rm a}(0, t) & =\rho_{\rm h}^\star v_{\rm h}^\star+\rho_{\rm a}^\star v_{\rm a}^\star, \\
    \rho_{\rm h}(L, t) v_{\rm h}(L, t)+\rho_{\rm a}(L, t) v_{\rm a}(L, t) & =q_{\rm h}^\star+q_{\rm a}^\star + U(t),
\end{align}
where the spatial and time domain is defined as $(x,t) \in [0,L] \times \mathbb{R}^+$,  $\rho_{\rm h}^\star$, $\rho_{\rm a}^\star$ are the equilibrium densities, and $v_{\rm h}^\star$, $v_{\rm a}^\star$ are the equilibrium speeds. We will design event-triggered boundary controller  boundary control signal of ramp metering or VSL. We define the area occupancy $AO$ to describe the interaction between the two-class vehicles on the road~\cite{burkhardt2021stop, mohan2017heterogeneous}
\begin{align}
    AO(\rho_{\rm h},\rho_{\rm a}) = \frac{a_{\rm h} \rho_{\rm h} + a_{\rm a} \rho_{\rm a}}{W},
\end{align}
where $W$ is the road width. The impact area for HV $a_{\rm h}$ and AV $a_{\rm a}$ can be described as: 
\begin{align}
    a_{\rm h} = d \times (l + s_{\rm h})a_{\rm a} = d \times (l + s_{\rm a})
\end{align}
where $d$ is the vehicle width, $l$ is the vehicle length. We assume that the width and length are the same. $s_{\rm h}$ is the car-following gap of HV, $s_{\rm a}$ is the car-following gap of AV. The fundamental diagram based on the area occupancy is introduced for velocity-density equilibrium relation as:
\begin{align}
    v_{\rm h}^\star = V_{e,\rm h}(\rho_{\rm h},\rho_{\rm a}) = V_{\rm h} \left( 1 - \left(\frac{AO}{\overline{AO}_{\rm h}}\right)^{\gamma_{\rm h}}\right),\\
    v_{\rm a}^\star = V_{e,\rm a}(\rho_{\rm h},\rho_{\rm a}) = V_{\rm a} \left( 1 - \left(\frac{AO}{\overline{AO}_{\rm a}}\right)
    ^{\gamma_{\rm a}}\right),
\end{align}
where $V_{\rm h}$, $V_{\rm a}$ are the maximum speed, $\overline{AO}_{\rm h}$, $\overline{AO}_{\rm a}$ are the maximum area occupancy, $\gamma_{\rm h}$, $\gamma_{\rm a}$ are the traffic pressure exponent.

Compared to HV, AV tends to have a larger spacing due to the conservative driving strategies they have equipped. A larger spacing leads to a larger impact area, inducing the "creeping effect" on the road that HV takes over AV in congested regimes.

\section{Backstepping Control Design}\label{sec3}
\subsection{Boundary control model}
Linearizing the system at its equilibrium point $\rho_{\rm h}^\star$, $\rho_{\rm a}^\star$, $v_{\rm h}^\star$, $v_{\rm a}^\star$ and defining a small deviation $\Tilde{\rho}_{\rm h}(x,t) = \rho_{\rm h}(x,t) - \rho_{\rm h}^\star$, $\Tilde{v}_{\rm h}(x,t) = v_{\rm h}(x,t) - v_{\rm h}^\star$, $\Tilde{\rho}_{\rm a}(x,t) = \rho_{\rm a}(x,t) - \rho_{\rm a}^\star$, $\Tilde{v}_{\rm a}(x,t) = v_{\rm a}(x,t) - v_{\rm a}^\star$. Writing the system in an augmented expression $\mathbf{z}(x,t) = \begin{bmatrix}
    \Tilde{\rho}_{\rm h}(x,t)  &  \Tilde{v}_{\rm h}(x,t) & \Tilde{\rho}_{\rm a}(x,t) & \Tilde{v}_{\rm a}(x,t)
\end{bmatrix}^\mathsf{T}$ Defining the matrix $\mathbf{V} =\{ \Hat{v}_{ij}\}_{1 \leq i, j \leq 4}$ such that the coefficient matrix is diagonalized as $\mathbf{V}^{-1} \mathbf{J}_{\lambda} \mathbf{V}$ = $\text{Diag}\{\lambda_1,\lambda_2,\lambda_3,\lambda_4\}$, with  positive eigenvalues in ascending order. We also define the source term matrix as $\Hat{\mathbf{J}} = \mathbf{V}^{-1}{\mathbf{J}} \mathbf{V} = \{ \Hat{J}_{ij} \}_{1 \leq i,j \leq 4}$. 
The transformation matrix $\mathbf{T} $ is given as
\begin{align}
     \mathbf{T}  = \begin{bmatrix}
        \\
       \mathbf{T} ^{+}\\
       \\ \hline 
       \mathbf{T} ^{-}
   \end{bmatrix}= \begin{bmatrix}
       0 & \mathrm{e}^{-\frac{\Hat{J} _{22}}{v_{\rm a}^*} x} & 0 & 0 \\
0 & 0 & \mathrm{e}^{-\frac{\Hat{J}_{33} }{\lambda_{3}} x} & 0 \\
\mathrm{e}^{-\frac{\Hat{J}_{11} }{v_{\rm h}^*} x} & 0 & 0 & 0 \\
0 & 0 & 0 & \mathrm{e}^{-\frac{\Hat{J}_{44} }{\lambda_{4}} x}
\end{bmatrix} \mathbf{V} ^{-1},
\end{align}
where $\mathbf{T}^{+} \in \mathbb{R}^{3\times 4}$ and $\mathbf{T}^{-} \in \mathbb{R}^{1\times 4}$. The change of coordinates is
\begin{align}
\begin{bmatrix}
    w_1 & w_2 & w_3 & w_4
\end{bmatrix}^\mathsf{T}=\mathbf{T} \mathbf{z}. \label{transfomation}
\end{align}
Then we perform Riemann transformation of the linarized system, thus we get 
\begin{align}
        \mathbf{w}^+_t(x,t) +\Lambda^{+} \mathbf{w}^+_x(x,t) =& \Sigma^{++}(x)\mathbf{w}^+ (x,t)\nonumber\\
        &+\Sigma^{+-}(x) \mathbf{w}^-(x, t), \label{clpsys1}\\
        \mathbf{w}^-_t(x, t)-\Lambda^{-} \mathbf{w}^-_x(x, t) =& \Sigma^{-+}(x) \mathbf{w}^+(x,t),\label{clpsys2}\\
    \mathbf{w}^+(0,t)  =& {Q} \mathbf{w}^-(0, t), \label{clpsys3} \\
    \mathbf{w}^-(L, t) =& {R}\mathbf{w}^+(L, t)+\bar{U}(t), \label{clpsys4}
\end{align}
where $\mathbf{w}^{+} = [w_1, w_2, w_3]^\mathsf{T}$, $\mathbf{w}^{-} = w_4$. The coefficient matrices are given as
$\Lambda^{+} =\text{Diag}\{\lambda_{1},\lambda_{2},\lambda_{3}\}$, $\Lambda^- = -\lambda_4$, $\Sigma^{++}(x)$, $\Sigma^{+-}(x)$, $\Sigma^{-+}(x)$, $Q\in \mathbb{R}^{3 \times 1}$, and $R\in \mathbb{R}^{1\times 3}$ coefficients which can be obtained by the Riemann transformation. The details of the coefficients can be found in~\cite{burkhardt2021stop}. Also, $\bar{U}(t)=\mathrm{e}^{-\frac{\Hat{J}_{44} }{\lambda_{4}} L} \frac{1}{\kappa} U(t)$, $\kappa=v_{\rm h}^* \Hat{v}_{1 4} +\rho_{\rm h}^* \Hat{v}_{2 4} +v_{\rm a}^* \Hat{v}_{3 4} +\rho_{\rm a}^* \Hat{v}_{4 4}$.
The eigenvalues were shown to satisfy the following condition: \cite{zhang2006hyperbolicity}
\begin{align}
     \lambda_{4} \leq \min\{ \lambda_{1},\lambda_{3} \} \leq \lambda_{2}\leq\max\{\lambda_{1},\lambda_{3}\}.
\end{align}
The traffic system would be congested if $\lambda_4 < 0$.  In the congested regime, the traffic information propagates from downstream to upstream and the efficiency of the traffic system becomes low. The traffic system can be divided into free and congested regimes based on the direction of propagation of traffic waves.

\subsection{Backstepping transformation and controller design}
We consider the stabilization of the closed-loop system \eqref{clpsys1}-\eqref{clpsys4} with continuous control input at each time step. Defining the backstepping transformation:
\begin{align}
    \mathcal{K}\mathbf{w}= \begin{pmatrix} \mathbf{w}^{+} \\
    \mathbf{w}^- -\int_0^x \mathbf{K}(x, \xi)\mathbf{w}^{+}(\xi,t) +M(x,\xi) \mathbf{w}^-(\xi, t) d \xi \end{pmatrix},\label{back4}
\end{align}
where $\mathbf{w} = [\mathbf{w}^{+}, \mathbf{w}^{-}]$ and the backstepping control kernel $\mathbf{K}(x,\xi)\in \mathbb{R}^{1\times 3}$, $M(x,\xi) \in \mathbb{R}^{1\times 1}$ are defined as:
\begin{align}
    \mathbf{K}(x, \xi)=\begin{bmatrix}
        k_{1}(x, \xi) & k_{2}(x, \xi) & k_{3}(x, \xi)
    \end{bmatrix}
\end{align}
Both kernels are defined in the triangular domain $\mathcal{T}=\{0 \leq \xi \leq x \leq L\}$. And the target perturbed system is:
\begin{align}
\alpha_t(x,t)+\Lambda^{+} \alpha_x(x,t) =&\Sigma^{++}(x)\alpha(x,t) +\Sigma^{+-}(x) \beta(x,t) \nonumber \\
+\int_0^x\mathbf{C}^+(x,\xi)\alpha(\xi,t) d\xi &+ \int_0^x\mathbf{C}^-(x,\xi)\beta(\xi,t)d\xi,
\label{tarsys1}\\
\beta_t\left(x, t\right)-\Lambda^{-} \beta_x(x, t)&=0, \label{tarsys2}\\
    \alpha(0,t)  &= {Q} \beta(0, t) \label{tarsys3} \\
    \beta(L, t) &= 0 \label{tarsys4}
\end{align}
where $\alpha = [\alpha_1, \alpha_2, \alpha_3]^\mathsf{T}$. The coefficients $ \mathbf{C}^+(x,\xi)\in \mathbb{R}^{3\times 3}$ and $\mathbf{C}^-(x,\xi)\in \mathbb{R}^{3\times1}$ are defined in the same triangular domain $\mathcal{T}$. 
The kernel equations are stated in~\cite{burkhardt2021stop} and the well-posedness of the target system and the kernel equations are proved in \cite{hu2015control,zhang2024mean}. The control input is given as:
\begin{align}
     \Bar{U}(t)=& \int_0^L\left(\mathbf{K}(L, \xi)\mathbf{w}^+(\xi,t)  +M(L, \xi) \mathbf{w}^- (\xi, t)\right) d \xi. \nonumber\\
     &-{R} \mathbf{w}^+(L,t).\label{control_law_w}
\end{align}
\subsection{Inverse Transformation}
The transformation~\eqref{back4} is invertible such that the target system share the same properties with the original system. The inverse transformation turn the target system~\eqref{tarsys1}-\eqref{tarsys4} into the original system~\eqref{clpsys1}-\eqref{clpsys4}:
\begin{align}
    \mathcal{L}\vartheta  = \begin{pmatrix}
        \alpha\\
        \beta - \int_0^x (\mathbf{L}(x,\xi) \alpha(\xi,t) + N(x,\xi)\beta(\xi,t))d\xi
    \end{pmatrix}\label{invback}
\end{align}
where $\vartheta = [\alpha_1,\alpha_2,\alpha_3, \beta]^\mathsf{T}$ and $\mathbf{L}(x,\xi) \in \mathbb{R}^{3\times 1}$, $N(x,\xi)\in \mathbb{R}^{1\times 1}$ are defined as:
\begin{align}
    \mathbf{L}(x,\xi) = \begin{bmatrix}
        \ell_1(x,\xi) & \ell_2(x,\xi) & \ell_3(x,\xi)
    \end{bmatrix}
\end{align}
Inverse kernels are also defined in the same triangular domain $\mathcal{T}$. The inverse kernel equations are easily obtained in~\cite{hu2015control}.
The states $\mathbf{w}$ and $\vartheta$ have equivalent $L^2$ norm, i.e. there exist two constants $p_1>0$ and $p_2>0$ such that $p_1\norm{\mathbf{w}}_{L^2}^2 \leq \norm{\vartheta}_{L^2}^2\leq p_2\norm{\mathbf{w}}_{L^2}^2$, where $\vartheta = (\alpha_1,\alpha_2,\alpha_3,\beta)$. The continuous-time control input $\Bar{U}(t)$ is calculated using states $(\alpha,\beta)$ of target system:
\begin{align}
    \bar{U}(t)=&\int_0^L (\mathbf{L}(L, \xi)\alpha(\xi,t) +N(L, \xi) \beta (\xi, t)) {d} \xi \nonumber\\ 
    &-R\mathbf{w}^+(L,t)
    \label{control law}
\end{align}

\section{Event-triggered Boundary Control}\label{sec4}
In this section, we introduce the event-triggered conditions for the traffic system, which determine the time intervals at which the controller should be updated. We then ensure the exponential stability of the closed-loop system. First, we consider the stabilization of the closed-loop system based on events by sampling the continuous-time controller $\Bar{U}(t)$ at a certain sequence of time instants. The controller is updated when the triggering conditions are met. We then redefine the boundary control input in~\eqref{clpsys4}:
\begin{align}
    \mathbf{w}^-(L, t) = {R}\mathbf{w}^+(L, t)+\bar{U}_d(t),
\end{align}
where $\bar{U}d(t) = \bar{U}(t) + d(t)$ for all $t \in [t_k, t{k+1})$, $k \geq 0$. Here, $d(t)$ represents the deviation between the theoretical control input and the event-triggered control input. Consequently, the sampled control law is expressed as:
\begin{align}
     \bar{U}_d(t) = &\int_0^L (\mathbf{L}(L, \xi)\alpha(\xi,t_k) +N(L, \xi) \beta (\xi, t_k)) {d} \xi \nonumber\\
     &-R\mathbf{w}^+(L,t_k), \label{control law sam}
\end{align}
thus we get the actuation deviation $d(t)$:
\begin{align}
    d(t) = &-R(\mathbf{w}^+(L,t_k) - \mathbf{w}^+(L,t)) \nonumber\\ 
    &+ \int_0^L \bigg(\mathbf{L}(L, \xi)(\alpha(\xi,t_k)-\alpha(\xi,t))\nonumber\\
    &+N(L, \xi) (\beta (\xi, t_k)- \beta (\xi, t)\bigg) {d} \xi. \label{dt}
\end{align}
Applying the sampled control law $\Bar{U}_d(t)$ to the system \eqref{clpsys1}-\eqref{clpsys4}, we get the perturbed target system:
\begin{align}
\alpha_t(x,t)+\Lambda^{+} \alpha_x(x,t) =&\Sigma^{++}(x)\alpha(x,t) +\Sigma^{+-}(x) \beta(x,t) \nonumber \\
+\int_0^x\mathbf{C}^+(x,\xi)\alpha(\xi,t) d\xi &+ \int_0^x\mathbf{C}^-(x,\xi)\beta(\xi,t)d\xi,
\label{tarpd1}\\
\beta_t\left(x, t\right)-\Lambda^{-} \beta_x(x, t)&=0, \label{tarpd2}\\
\alpha(0,t)  &= {Q} \beta(0, t), \label{tarpd3} \\
\beta(L, t) &= d(t). \label{tarpd4}
\end{align}
We consider a triggering condition relies on the evolution of $d(t)$ and the following Lyapunov function,
\begin{align}
     V(t) = \int_0^L \sum_{i=1}^3 \frac{A_i}{\lambda_i}\mathrm{e}^{-\frac{\mu x}{\lambda_i}} \alpha_i^2(x,t) + \frac{B}{\Lambda^-} \mathrm{e}^{\frac{\mu x}{\Lambda^-}} \beta^2(x,t) dx,
     \label{lyapunov}
\end{align}
where the constant coefficients $A_1$, $A_2$, $A_3$, $B$ and $\mu$ are positive. The Lyapunov candidate is equivalent to the $L_2$ norm of the state $\vartheta$, therefore, there exist two constants $p_3>0$ and $p_4>0$ such that $p_3\norm{\vartheta}_{L^2}^2 \leq V(t) \leq p_4\norm{\vartheta}_{L^2}^2$.

\subsection{Dynamic triggering condition}
We define the event-triggered mechanism (ETM) using the dynamic triggering condition which can be derived by the evolution of the controller deviation \eqref{dt} and another dynamic variable $m(t)$.
\begin{defi}
    Let the Lyapunov candidate $V(t)$ be given by \eqref{lyapunov}. The event-triggered controller is defined in \eqref{control law sam} with a dynamic event-triggered mechanism. The time of the execution $t_k \geq 0$ from $t_0 = 0$ in a finite number set of times. The set is determined by:
    \begin{itemize}
        \item \text{if} \{$t>t_k$ $\wedge $ $\zeta B \mathrm{e}^{\frac{\mu L}{\Lambda^-}}d^2(t)$ $\geq$ $ \zeta \mu \sigma V(t) - m(t)$\} = $\emptyset$, then the set of the times of the events is \{ $t_0,\dots,t_k$\}.
        \item  \text{if} \{$t>t_k$ $\wedge $ $ \zeta B \mathrm{e}^{\frac{\mu L}{\Lambda^-}}d^2(t)$ $\geq$ $ \zeta \mu \sigma V(t) - m(t)$\} $\ne$ $\emptyset$, then the next execution time is determined by:
        $t_{k+1}$ $=$ $\inf\{ t>t_{k} \wedge  B \mathrm{e}^{\frac{\mu L}{\Lambda^-}}d^2(t)$ $\geq  \zeta \mu \sigma V(t) - m(t)\}$,
    \end{itemize}
    where $m(t)$ satisfies the ordinary differential equation,
    \begin{align}
        \Dot{m}(t) = &- \eta m(t) +  B \mathrm{e}^{\frac{\mu L}{\Lambda^-}}d^2(t) - \sigma \mu V(t) \nonumber\\
        &- \sum_{i=1}^3 \varsigma_i\alpha_i^2(L,t) - \varsigma_4 \beta^2(0,t),
        \label{mdyna}
    \end{align}
    where $\zeta>0$, $\mu>0$, $\sigma > 0$, $\varsigma_i>0$, $i\in \{1,2,3,4\}$, $\eta >0$ and $m(0) = m^0$.
    \label{defdynamic}
\end{defi}
Based on the definition \ref{defdynamic}, we have the following result for $m(t)$.
\begin{lem}
    Under the ETM in Definition \ref{defdynamic}, it holds that $\zeta B \mathrm{e}^{\frac{\mu L}{\Lambda^-}}d^2(t)$ $-$ $ \zeta \mu \sigma V(t) + m(t)$ $\leq 0$ with $m(t) \leq 0$
    \label{mbound}
\end{lem}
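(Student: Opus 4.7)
\medskip

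\noindent\textbf{Proof plan for Lemma~\ref{mbound}.} My plan is to split the lemma into its two assertions and prove each by exploiting the very definition of the triggering rule together with the scalar ODE \eqref{mdyna}. The first inequality is essentially tautological from the ETM, while the sign property of $m(t)$ will follow from a comparison argument applied to \eqref{mdyna}.

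\medskip

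\noindent\emph{Step 1: the inequality $\zeta B e^{\mu L/\Lambda^-}d^2(t)-\zeta\mu\sigma V(t)+m(t)\le 0$.} By Definition~\ref{defdynamic}, for any $t\in[t_k,t_{k+1})$ the event time has not yet been triggered. Since $t_{k+1}$ is defined as the infimum of times for which $\zeta B e^{\mu L/\Lambda^-}d^2(t)\ge \zeta\mu\sigma V(t)-m(t)$, the reverse inequality must hold on $[t_k,t_{k+1})$. This is precisely the desired bound, and the case when the triggering set is empty is even simpler (the bound holds globally). The only mild point is that at the instant $t=t_{k+1}$ one uses continuity of $d(\cdot)$, $V(\cdot)$ and $m(\cdot)$ to pass from the strict inequality on the open interval to a non-strict one, after which the controller is updated and the argument restarts on $[t_{k+1},t_{k+2})$.

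\medskip

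\noindent\emph{Step 2: the inequality $m(t)\le 0$.} I would feed the bound from Step~1 into the ODE~\eqref{mdyna}. Rearranging Step~1 gives $B e^{\mu L/\Lambda^-}d^2(t)-\sigma\mu V(t)\le -m(t)/\zeta$, so
\begin{align}
\dot m(t) \;\le\; -\Bigl(\eta+\tfrac{1}{\zeta}\Bigr)m(t)-\sum_{i=1}^{3}\varsigma_i\alpha_i^2(L,t)-\varsigma_4\beta^2(0,t).
\end{align}
The forcing term on the right-hand side is non-positive, so the standard scalar comparison lemma yields
\begin{align}
m(t)\;\le\; e^{-(\eta+1/\zeta)t}\,m^0 - \int_0^t e^{-(\eta+1/\zeta)(t-s)}\!\Bigl(\sum_{i=1}^3\varsigma_i\alpha_i^2(L,s)+\varsigma_4\beta^2(0,s)\Bigr)\,ds .
\end{align}
Provided the initial condition is chosen with $m^0\le 0$ (as is standard in the dynamic-ETM literature and, implicitly, in Definition~\ref{defdynamic}), both terms on the right are non-positive, hence $m(t)\le 0$ for all $t\ge 0$.

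\medskip

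\noindent\emph{Main obstacle.} The analytical content is quite light; the subtle point is conceptual rather than computational. First, one must ensure that the comparison argument is consistent across trigger times: $m(t)$ is continuous (its dynamics \eqref{mdyna} involve no jumps), and $V(t)$, $d(t)$ are continuous in $t$ even across $t_k$ since the sampled boundary values only change the PDE input, not the interior states instantaneously; so the piecewise reasoning extends to all $t\ge 0$. Second, one should make the hypothesis $m^0\le 0$ explicit, since without it the sign of $m(t)$ cannot be controlled by the ETM alone. With these two observations in place the rest of the proof reduces to rearranging the triggering inequality and invoking the scalar comparison lemma, both of which are routine.
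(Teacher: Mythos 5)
Your proposal is correct and follows essentially the same route as the paper: read the first inequality off the triggering rule, substitute it into \eqref{mdyna} to get $\dot m \le -(\eta+1/\zeta)m - \sum_i\varsigma_i\alpha_i^2(L,t)-\varsigma_4\beta^2(0,t)$, and conclude by the scalar comparison principle. Your only addition is making the hypothesis $m^0\le 0$ explicit, which the paper leaves implicit in Definition~\ref{defdynamic} (and only fixes later by taking $m(0)=0$ in the proof of Theorem~\ref{esthm}); this is a worthwhile clarification but not a different argument.
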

\begin{proof}
    We already know the ETM in Definition \ref{defdynamic}. It holds that the system in the simulation period always guarantee the following condition,
    \begin{align}
        \zeta B \mathrm{e}^{\frac{\mu L}{\Lambda^-}}d^2(t) -  \zeta \mu \sigma V(t) \leq - m(t).
    \end{align}
    We have the result
    \begin{align}
         B \mathrm{e}^{\frac{\mu L}{\Lambda^-}}d^2(t) -  \mu \sigma V(t) \leq -\frac{1}{\zeta} m(t),
    \end{align}
    using \eqref{mdyna}, we get 
    \begin{align}
        \Dot{m}(t) \leq -\eta m - \frac{1}{\zeta} m(t) -\sum_{i=1}^3 \varsigma_i \alpha_i^2(L,t) - \varsigma_4\beta^2(0,t).
    \end{align}
    Using the comparison principle, we have 
    \begin{align}
        m(t) \leq 0 , \forall t \geq 0,
    \end{align}
    this finishes the proof of Lemma \ref{mbound}.
\end{proof}
We also have the following lemma for the bound of the actuation deviation $d(t)$.
\begin{lem}
    There exists $\epsilon_i >0, i\in\{1,2,3\}$, $\phi_1$ and $\phi_2 > 0$, for the $d(t)$ introduced in \eqref{dt} with $t \in (t_k, t_{k+1})$, such that
    \begin{align}
        \Dot{d}^2(t) \leq \sum_{i=1}^3\epsilon_i \alpha^2_i(L,t) + \phi_1 d^2(t)+ \phi_2 V(t).
    \end{align}
    \label{bounddt}
\end{lem}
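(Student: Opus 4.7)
On the sampling interval $(t_k,t_{k+1})$ the frozen sampled data $\alpha(\xi,t_k)$, $\beta(\xi,t_k)$ and $\mathbf{w}^+(L,t_k)$ in the explicit formula \eqref{dt} for $d(t)$ drop out under time-differentiation, so $\dot d(t)$ depends only on the time derivatives of the ``live'' states. The plan is: (i) differentiate \eqref{dt} in $t$; (ii) substitute the target dynamics \eqref{tarpd1}--\eqref{tarpd2} and integrate by parts in $\xi$ to convert the spatial derivatives into boundary evaluations at $\xi=0,L$; (iii) simplify using the boundary conditions \eqref{tarpd3}--\eqref{tarpd4}; (iv) square and apply Cauchy--Schwarz/Young.

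\textbf{Execution.}
Using $\mathbf{w}^+=\alpha$ from the backstepping transformation, time differentiation of \eqref{dt} yields
\[
\dot d(t) = R\alpha_t(L,t)-\int_0^L \mathbf{L}(L,\xi)\alpha_t(\xi,t)d\xi -\int_0^L N(L,\xi)\beta_t(\xi,t)d\xi.
\]
Substituting $\alpha_t$ from \eqref{tarpd1} and $\beta_t=\Lambda^{-}\beta_x$ from \eqref{tarpd2}, and integrating by parts in $\xi$ for the $\alpha_x,\beta_x$ pieces, converts $\dot d(t)$ into an affine combination of the boundary traces $\alpha_i(L,t)$, $\beta(0,t)$, $d(t)=\beta(L,t)$, and spatial integrals of $\alpha,\beta$ weighted by bounded combinations of $\mathbf{L},N,\Sigma^{\pm\pm},\mathbf{C}^\pm$ and their $\xi$-derivatives. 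Squaring via $(\sum_{i=1}^{n}a_i)^2\le n\sum a_i^2$ and applying Cauchy--Schwarz to the spatial integrals, together with $\|\alpha\|_{L^2}^{2}+\|\beta\|_{L^2}^{2}\le C\,V(t)$ from the Lyapunov equivalence noted after \eqref{lyapunov}, produces the three families claimed on the right-hand side: $\sum_{i=1}^{3}\epsilon_i\alpha_i^2(L,t)$ from the $x=L$ traces, $\phi_1 d^2(t)$ absorbing the $\beta(L,t)=d(t)$ contributions (including the $R\Sigma^{+-}(L)d(t)$ piece coming from $\alpha_t(L,t)$), and $\phi_2 V(t)$ from the kernel-weighted integrals.

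\textbf{Main obstacle.}
Two genuine trace obstructions must be disposed of. First, the $\xi$-integration by parts on the $N$-integral leaves a residue proportional to $\beta(0,t)$, which is a pointwise boundary value and is not majorised by $\|\beta\|_{L^2}$. Second, substituting \eqref{tarpd1} at $x=L$ inside $R\alpha_t(L,t)$ introduces the derivative trace $\alpha_x(L,t)$, again not $L^2$-controlled. Both obstructions are neutralised by exploiting the backstepping kernel equations derived in \cite{hu2015control,zhang2024mean}: the free boundary data of $\mathbf{L}$ and $N$ at $\xi=0$ can be selected so that $N(L,0)=0$, eliminating the $\beta(0,t)$ residue, while the $\alpha_x(L,t)$ coefficient $-R\Lambda^{+}$ is compensated by the $\mathbf{L}(L,L)\Lambda^{+}\alpha(L,t)$ residue of the $\mathbf{L}$-integration by parts combined with the $\xi=x$ boundary condition that the kernel equations impose on $\mathbf{L}$. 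This bookkeeping is the most delicate part of the proof; once completed, every remaining term belongs to one of the three families on the right-hand side, and the constants $\epsilon_i,\phi_1,\phi_2$ are determined explicitly by the $L^\infty$ bounds on the kernels and the chosen Young weights.
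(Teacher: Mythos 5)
Your overall route --- differentiate \eqref{dt} in $t$ (the $t_k$ data being frozen), substitute the target dynamics \eqref{tarpd1}--\eqref{tarpd4}, integrate by parts in $\xi$, then square with Young/Cauchy--Schwarz and absorb the spatial integrals into $V(t)$ via the norm equivalences --- is exactly the paper's, and your accounting of where the three families $\alpha_i^2(L,t)$, $d^2(t)$ and $V(t)$ come from matches the paper's explicit constants $\epsilon_i$, $\phi_1=8N^2(L,L)(\Lambda^-)^2$ and $\phi_2=\tfrac{8}{p_3}\left(c_1+\tfrac{c_2}{p_1}\right)$.

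However, your disposal of the second ``trace obstruction'' does not work. Writing $R\alpha_t(L,t)=-R\Lambda^+\alpha_x(L,t)+R\Sigma^{++}(L)\alpha(L,t)+\dots$ introduces the spatial-derivative trace $\alpha_x(L,t)$, and you claim this is compensated by the $\mathbf{L}(L,L)\Lambda^+\alpha(L,t)$ residue together with the $\xi=x$ kernel boundary condition. That cancellation is impossible: $\alpha_x(L,t)$ and $\alpha(L,t)$ are independent quantities (the pointwise value of a function versus that of its spatial derivative), so no algebraic choice of $\mathbf{L}(L,L)$ can make a multiple of one absorb a multiple of the other, and $\alpha_x(L,t)$ is controlled neither by $V(t)$ nor by the boundary traces allowed on the right-hand side of the lemma. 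The paper sidesteps this entirely: its expression for $\dot d(t)$ carries no contribution from the $-R(\mathbf{w}^+(L,t_k)-\mathbf{w}^+(L,t))$ part of \eqref{dt}, i.e.\ only the integral part of the deviation is treated as time-varying, so the problematic trace never appears there. If you retain the $R$-term you must remove it structurally (e.g.\ by sampling only the integral part of the feedback), not by a kernel identity. A smaller issue: for the $\beta(0,t)$ residue, imposing $N(L,0)=0$ is insufficient, because integrating the $\mathbf{L}$-integral by parts and using $\alpha(0,t)=Q\beta(0,t)$ produces the additional piece $\mathbf{L}(L,0)\Lambda^+Q\,\beta(0,t)$; what eliminates the residue is the full compatibility condition $N(L,0)\Lambda^-=\mathbf{L}(L,0)\Lambda^+Q$ (which is exactly what annihilates the coefficient $N(L,0)\Lambda^--\mathbf{L}(L,0)\Lambda^+Q$ appearing in the paper's $\dot d$), and this condition is dictated by the inverse-kernel equations rather than being freely selectable.
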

\begin{proof}
   Taking the time derivative of $d(t)$ and using the dynamics of the perturbed target system in \eqref{tarpd1}-\eqref{tarpd4} and integrating by parts, we get
    \begin{align}
        &\Dot{d}(t) = \mathbf{L}(L,L)\Lambda^+\alpha(L,t) - N(L,L)\Lambda^- \beta(L,t) \nonumber\\
        &+ \left( N(L,0)\Lambda^- - \mathbf{L}(L,0)\Lambda^+Q \right)\beta(0,t)\nonumber\\
        &- \int_0^L \mathbf{L}_{\xi}(L,\xi)\Lambda^+\alpha(\xi,t)d\xi + \int_0^L N_{\xi}(L,\xi)\Lambda^-\beta(\xi,t)d\xi \nonumber\\
        &-\int_0^L \mathbf{L}(L,\xi)\Sigma^{++}(\xi)\mathbf{w}^+(\xi,t)d\xi \nonumber\\
        &- \int_0^L \mathbf{L}(L,\xi)\Sigma^{+-}(\xi)\mathbf{w}^-(\xi,t)d\xi.
    \end{align}
    Taking the square of $\Dot{d}(t)$, combining Young's inequality and the Cauchy-Schwarz inequality, we have
    \begin{align}
       &\Dot{d}^2(t) \leq  8\sum_{i=1}^3 \ell_i^2(L,L)\lambda_i^2\alpha_i^2(L,t) \nonumber\\
       &+ 8 N^2(L,L)(\Lambda^-)^2 \beta^2(L,t) + \frac{8}{p_3}c_1V(t) + \frac{8}{p_1p_3}c_2 V(t)\nonumber\\
       &\leq 8\sum_{i=1}^3 \ell_i^2(L,L)\lambda_i^2\alpha_i^2(L,t) + 8 N^2(L,L)(\Lambda^-)^2 d^2(t)\nonumber\\
       &+ \frac{8}{p_3} \left(c_1 + \frac{c_2}{p_1}\right)V(t),
    \end{align}
    where 
    $c_1 $ $=$ $ \max\{\int_0^L$ $(\mathbf{L}_\xi(L,\xi)\Lambda^+)^2 d\xi,$ $ \int_0^L$ $(N_\xi(L,\xi)\Lambda^-)^2 d\xi \}$
    $c_2 $ $ =$ $ \max\{\int_0^L (\mathbf{L}(L,\xi) \Sigma^{++}(\xi))^2 d\xi,$ $ \int_0^L (\mathbf{L}(L,\xi) \Sigma^{+-}(\xi))^2 d\xi \}$.
    And thus we get $\epsilon_i= 8 \ell_i^2(L,L)\lambda_i, i \in \{1,2,3\}$, $\phi_1 = 8N^2(L,L)(\Lambda^-)^2$, $\phi_2 = \frac{8}{p_3} \left(c_1 + \frac{c_2}{p_1}\right)$. This concludes the proof of Lemma \ref{bounddt}.
\end{proof}
\subsection{Avoidance of Zeno phenomenon}
Under the dynamic event triggering condition, the Zeno phenomenon should be avoided. In this section, we prove that the dynamic event-triggering condition for the system \eqref{clpsys1}-\eqref{clpsys4} avoids the Zeno phenomenon. We have the following theorem.
\begin{thm}
    There exists a minimal dwell time $\tau^\star>0$ between two adjacent trigger times, $t_{k+1} - t_k \geq \tau^\star, k \geq 0$, under the dynamic trigger condition in Definition \ref{defdynamic} with parameters $\zeta$, $\mu$, $\sigma$, $\varsigma_i,i \in\{1,2,3,4\}$, $\eta$, $\epsilon_i, i \in \{1,2,3\}$. And the parameters satisfy:
    \begin{align}
        \varsigma_i &\geq \max\{\zeta B \mathrm{e}^{\frac{\mu L}{\Lambda^-}}\epsilon_i,\zeta\mu\epsilon_i, i \in \{1,2,3\}\},\\
        \varsigma_4 &\geq \max\{0, -2\zeta\mu(\sum A_iq_i^2 - B), i \in \{1,2,3\}.
    \end{align}
    \label{minimalthm}
\end{thm}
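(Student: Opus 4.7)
The plan is to show that the slack in the triggering inequality cannot close in arbitrarily short time. Fix a trigger instant $t_k$. Since the sampled controller in \eqref{control law sam} is evaluated at $t_k$, one has $d(t_k^+) = 0$ directly from \eqref{dt}, while Lemma \ref{mbound} yields $m(t_k^+) \leq 0$. Consequently the triggering quantity $\chi(t) := \zeta B \mathrm{e}^{\mu L/\Lambda^-} d^2(t) - \zeta\mu\sigma V(t) + m(t)$ satisfies $\chi(t_k^+) \leq 0$ (strictly so whenever $V(t_k^+)>0$, which is the only nontrivial case), and the next execution time is characterized by $\chi(t_{k+1}) = 0$. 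A positive dwell time will follow once one shows that a suitable monotone image of $\chi$ evolves at a bounded rate.

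Following the strategy pioneered by Espitia for hyperbolic PDEs, I would introduce the ratio
\begin{align*}
\Psi(t) := \frac{\zeta B \mathrm{e}^{\mu L/\Lambda^-} d^2(t)}{\zeta\mu\sigma V(t) - m(t)},
\end{align*}
which lies in $[0,1]$ on $(t_k, t_{k+1})$, equals $0$ at $t_k^+$, and reaches $1$ at $t_{k+1}^-$. Differentiating via the quotient rule and substituting the bound of Lemma \ref{bounddt} for $\tfrac{d}{dt} d^2$, the ODE \eqref{mdyna} for $\dot m$, and the derivative of $V$ along \eqref{tarpd1}--\eqref{tarpd4} (obtained by standard integration by parts on the target system), one arrives at an upper bound on $\dot\Psi$ that collects interior terms proportional to $d^2$ and $V$ together with boundary terms proportional to $\alpha_i^2(L,t)$ and $\beta^2(0,t)$. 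The theorem's parameter choices are designed precisely to absorb these boundary contributions: the bounds $\varsigma_i \geq \zeta B \mathrm{e}^{\mu L/\Lambda^-} \epsilon_i$ and $\varsigma_i \geq \zeta\mu\epsilon_i$ dominate the $\alpha_i^2(L,t)$ terms arising from $2\zeta B \mathrm{e}^{\mu L/\Lambda^-} d\dot d$ and from $\zeta\mu\sigma\dot V$, respectively, while $\varsigma_4$ cancels the $\beta^2(0,t)$ term produced by substituting $\alpha(0,t)=Q\beta(0,t)$ into $\dot V$.

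Once the boundary terms are absorbed, only interior contributions survive, and the strict inequality $\zeta B \mathrm{e}^{\mu L/\Lambda^-} d^2(t) < \zeta\mu\sigma V(t) - m(t)$ valid on $(t_k, t_{k+1})$ collapses the remaining estimates into a scalar differential inequality of the form $\dot\Psi \leq a\Psi^2 + b\Psi + c$ with nonnegative constants $a,b,c$ depending on $\phi_1, \phi_2, \eta, \mu, \sigma, \zeta$. Integrating from $\Psi(t_k^+)=0$ to $\Psi=1$ then delivers the uniform lower bound $\tau^\star = \int_0^1 ds/(a s^2 + b s + c) > 0$. The principal obstacle is the careful bookkeeping in the cancellation step: $\dot V$ must be computed explicitly along the perturbed target system and each boundary and interior coefficient matched against the corresponding quantity in $\dot m$ and $2\zeta B \mathrm{e}^{\mu L/\Lambda^-} d\dot d$, in order to verify that the prescribed lower bounds on $\varsigma_i$ and $\varsigma_4$ are sharp enough. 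A secondary concern is to ensure the denominator $\zeta\mu\sigma V - m$ stays strictly positive on each inter-event interval, which is provided by Lemma \ref{mbound} together with a nontriviality assumption on the initial datum.
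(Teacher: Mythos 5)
Your overall architecture matches the paper's: both define a ratio function $\Psi$ that travels from $0$ to $1$ over each inter-event interval, bound $\dot\Psi$ by a Riccati-type inequality $\dot\Psi\le a\Psi^2+b\Psi+c$ using Lemma~\ref{bounddt}, the ODE \eqref{mdyna} and the $\dot V$ estimate, and integrate to obtain $\tau^\star=\int_0^1 \frac{ds}{as^2+bs+c}$. However, your choice of numerator, $N(t)=\zeta B\mathrm{e}^{\mu L/\Lambda^-}d^2(t)$, breaks the very cancellation you rely on. Write $\Psi=N/D$ with $D=\zeta\mu\sigma V-m$; then $\dot\Psi=\dot N/D-\Psi\,\dot D/D$. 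The boundary traces $\sum_i\epsilon_i\alpha_i^2(L,t)$ produced by Lemma~\ref{bounddt} when you estimate $2d\dot d\le d^2+\dot d^2$ sit in $\dot N/D$ \emph{without} a factor of $\Psi$, whereas the only negative terms available to absorb them, $-\sum_i\varsigma_i\alpha_i^2(L,t)-\varsigma_4\beta^2(0,t)$ coming from $\dot m$, enter exclusively through $-\Psi\,\dot D/D$ and are therefore scaled by $\Psi$. Near $t_k$, where $\Psi\approx 0$ (indeed $\Psi(t_k^+)=0$ by your own observation that $d(t_k^+)=0$), the absorbing terms vanish while the terms to be absorbed do not; and since $\alpha_i^2(L,t)$ is a pointwise trace, it cannot be dominated by the $L^2$-type quantity $V(t)$ either. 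So the coefficients $a,b,c$ in your Riccati bound cannot be taken state-independent, and the dwell-time integral does not follow from your $\Psi$.

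The paper avoids this precisely by placing $\tfrac12 m(t)$ in the numerator and keeping only $-\tfrac12 m(t)$ in the denominator: then $\dot N$ contains $\tfrac12\dot m$, which contributes the standalone (un-scaled) terms $-\tfrac12\sum_i\varsigma_i\alpha_i^2(L,t)-\tfrac12\varsigma_4\beta^2(0,t)$ that dominate the trace terms from $\dot d^2$ once $\varsigma_i\gtrsim \zeta B\mathrm{e}^{\mu L/\Lambda^-}\epsilon_i$. The price is that $\Psi(t_k)=\tfrac12 m(t_k)/(\zeta\mu\sigma V(t_k)-\tfrac12 m(t_k))\le 0$ rather than exactly $0$, which the paper handles with the intermediate value theorem: there is $t_k'\in[t_k,t_{k+1}]$ with $\Psi(t_k')=0$, and $t_{k+1}-t_k\ge t_{k+1}-t_k'\ge\tau^\star$. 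You should adopt this modified $\Psi$ (or otherwise produce un-scaled negative boundary terms inside $\dot N$); the rest of your outline --- the role of $\varsigma_4$ in absorbing $\beta^2(0,t)$ from $\dot V$, and the positivity of the denominator via Lemma~\ref{mbound} --- then goes through essentially as you describe.
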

\begin{proof}
    We know from Definition~\ref{defdynamic} that for all time $t \geq 0$, all events are executed to guarantee the following: 
    \begin{align}
        \zeta B \mathrm{e}^{\frac{\mu L}{\Lambda^-}} d^2(t) \leq  \zeta \mu \sigma V(t) - m(t).
    \end{align}
    Then we define the following function:
    \begin{align}
        \Psi(t) = \frac{\zeta B \mathrm{e}^{\frac{\mu L}{\Lambda^-}}d^2(t) + \frac{1}{2}m(t)}{ \zeta \mu \sigma V - \frac{1}{2}m(t)}.
    \end{align}
    The function $d(t)$ and $V(t)$ are continuous in time interval $[t_{k},t_{k+1}]$, therefore, the function $\Psi(t)$ is also a continuous function in $[t_{k},t_{k+1}]$. We can derive that there exists $t'_k > t_k$ such that $\forall t \in [t'_{k},t_{k+1}]$, $\Psi(t) \in [0,1]$ using the intermediate value theorem.
    Using Young's inequality, we have:
    \begin{align}
         &\Dot{\Psi}(t) \leq \frac{\zeta B \mathrm{e}^{\frac{\mu L}{\Lambda^-}}d^2}{ \zeta \mu \sigma V - \frac{1}{2}m} + \frac{\zeta B \mathrm{e}^{\frac{\mu L}{\Lambda^-}}\Dot{d}^2}{ \zeta \mu \sigma V - \frac{1}{2}m} \nonumber\\
         & +\frac{\frac{1}{2}(-\eta m +B \mathrm{e}^{\frac{\mu L}{\Lambda^-}}d^2 - \sigma\mu V)}{\zeta \sigma \mu V - \frac{1}{2}m}\nonumber\\
         & + \frac{\frac{1}{2}\left( - \sum_{i=1}^3 \varsigma_i\alpha_i^2(L,t) - \varsigma_4 \beta^2(0,t)\right)}{ \zeta \mu \sigma V - \frac{1}{2}m} -\frac{\zeta \mu \Dot{V}\Psi}{ \zeta \mu \sigma V - \frac{1}{2}m} \nonumber\\
         & + \frac{\frac{1}{2}\left( -\eta m +B \mathrm{e}^{\frac{\mu L}{\Lambda^-}}d^2 - \sigma\mu V \right)}{ \zeta \mu \sigma V - \frac{1}{2}m}\Psi \nonumber\\
         & +\frac{\frac{1}{2}\left( - \sum_{i=1}^3 \varsigma_i\alpha_i^2(L,t) - \varsigma_4 \beta^2(0,t)\right)}{ \zeta \mu \sigma V - \frac{1}{2}m} \Psi.
    \end{align}
    As defined in~\eqref{lyapunov},  the time derivative of $V(t)$ can be obtained by integrating by parts and using boundary conditions for perturbed target system. Thus, the $\Dot{V}$ is given as: 
    \begin{align}
        \Dot{V} &\leq - \sum_{i=1}^3 A_i \mathrm{e}^{-\frac{\mu L}{\lambda_i}}\alpha_i^2(L,t) +  (\sum_{i=1}^3 A_i q_i^2 - B) \beta^2(0,t) \nonumber\\
        & + B\mathrm{e}^{\frac{\mu L}{\Lambda^-}}d^2(t)-(\mu-\gamma) V,
    \end{align}
    where $\gamma = \frac{2A}{p_3\min\{ \lambda_i\}}( \max_{x\in [0,L]} \norm{\Sigma^{++}(x)} $ $+ (1+\frac{1}{p_1})$ $ \max_{x\in [0,L]} \norm{\Sigma^{+-}(x)})$.
    \begin{align}
        \Dot{V} &\leq - \sum_{i=1}^3 \alpha_i^2(L,t) +  \beta^2(0,t)  + B\mathrm{e}^{\frac{\mu L}{\Lambda^-}}d^2(t)-(\mu-\gamma) V,
    \end{align}
    Choosing $\varsigma_i \geq \zeta B \mathrm{e}^{\frac{\mu L}{\Lambda^-}}\epsilon_i$ and $\varsigma_i \geq \zeta\mu\epsilon_i$, $i\in \{1,2,3\}$, $\varsigma_4 > 0$ and $\varsigma_4 +2\zeta\mu(\sum A_iq_i^2 - B) >0$, we get the following equation after simplifying:
    \begin{align}
        \Dot{\Psi}(t) \leq&  \left(\frac{-\zeta\mu + \frac{1}{2}}{\zeta}\right)\Psi^2 +\left( 1+\phi_1 + \frac{1}{2\zeta} \right.\nonumber\\
        &+ \left.\frac{-\zeta\mu \sigma + \frac{1}{2}}{\zeta} + \eta + \frac{\zeta\mu \sigma(\mu-\gamma) - \frac{1}{2}\mu \sigma}{\zeta\mu \sigma} \right)\Psi\nonumber\\
        &+\left( \frac{(\zeta B \mathrm{e}^{\frac{\mu L}{\Lambda^-}}\phi_2 - \frac{1}{2} \mu \sigma)}{\zeta \mu \sigma} + \eta + 1+\phi_1+\frac{1}{2\zeta}\right).
    \end{align}
    Thus, the $\Psi(t)$ has the form:
    \begin{align}
        \Psi(t) \leq \varphi_1\Psi^2(t) + \varphi_2 \Psi(t) + \varphi_3,
    \end{align}
    where 
    $\varphi_1 = \frac{1}{2\zeta} - \mu \sigma$,
    $\varphi = 1+ \phi_1 + \frac{1}{2\zeta} (1-\sigma)\mu -\gamma +\eta$,
    $\varphi_3 = \frac{B \mathrm{e}^{\frac{\mu L}{\Lambda^-}}\phi_2}{\mu \sigma} + 1+\eta + \phi_1$. 
    Using the comparison principle, we get the time from $\Psi(t_k') = 0$ to $\Psi(t_{k+1}) = 1$ is at least
    \begin{align}
        \tau^\star = \int_0^L \frac{1}{\varphi_1 s^2 + \varphi_2s +\varphi_3} ds.
    \end{align}
    Then, $t_{k+1} - t_{k} \geq t_{k+1} - t_{t'_k} = \tau^\star$. This finishes the proof of Theorem \ref{minimalthm}.
\end{proof}
Now we have proved that there exists a minimal dwell time between two adjacent events. The Zeno phenomenon is avoided. Based on the previous results, the exponential stability of the system \eqref{clpsys1}-\eqref{clpsys4} with the event-triggered controller \eqref{control law sam} was obtained, as stated in Theorem~\ref{esthm}.
\begin{thm}
    Let $A_i >0, i\in\{1,2,3\}$, $B>0$, $\zeta>0$, $\eta \in (0,1)$, $\varsigma_i,i\in\{1,2,3,4\} \in (0,1)$ such that
    \begin{align}
        \varsigma_i - A_i \mathrm{e}^{-\frac{\mu L}{\lambda_i}} &\leq 0 , i \in \{1,2,3\}\\
        \varsigma_4 + \sum_{i=1}^3 A_i q_i^2 - B &\leq 0, i \in \{1,2,3\}
    \end{align}
    $V$ is given by \eqref{lyapunov} and $d$ is given by \eqref{dt}. The system \eqref{clpsys1}-\eqref{clpsys4} with the event-triggered controller \eqref{control law sam} is exponentially stable under the ETM in Definition \ref{defdynamic}.
    \label{esthm}
\end{thm}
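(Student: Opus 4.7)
The plan is to combine $V$ with the auxiliary variable $m$ into an augmented Lyapunov functional $W(t)=V(t)-m(t)$ and to show $\dot W\le -\lambda W$ for some $\lambda>0$. Since Lemma~\ref{mbound} already guarantees $m(t)\le 0$ for all $t\ge 0$, one has $W(t)\ge V(t)\ge 0$, so $W$ is a legitimate Lyapunov candidate; the norm equivalences $p_3\norm{\vartheta}_{L^2}^2\le V(t)$ and $p_1\norm{\mathbf{w}}_{L^2}^2\le\norm{\vartheta}_{L^2}^2$ stated earlier will then transfer the exponential decay of $V$ back to $\norm{\mathbf{w}}_{L^2}$, yielding the claim.

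First, I would record the bound on $\dot V$ along the perturbed target system~\eqref{tarpd1}--\eqref{tarpd4}. This is essentially the inequality invoked inside the proof of Theorem~\ref{minimalthm}: differentiating $V$, integrating by parts, exploiting $\beta(L,t)=d(t)$ and $\alpha(0,t)=Q\beta(0,t)$ at the boundaries, and absorbing the in-domain couplings through $\mathbf{C}^{\pm}$ via Young's and Cauchy-Schwarz, one arrives at
\begin{equation*}
\dot V\le -\sum_{i=1}^3 A_i\mathrm{e}^{-\mu L/\lambda_i}\alpha_i^2(L,t)+\Big(\sum_{i=1}^3 A_iq_i^2-B\Big)\beta^2(0,t)+B\mathrm{e}^{\mu L/\Lambda^-}d^2(t)-(\mu-\gamma)V(t),
\end{equation*}
with $\gamma$ as defined in the proof of Theorem~\ref{minimalthm}. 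Subtracting $\dot m$ as prescribed by~\eqref{mdyna} is then designed so that the $B\mathrm{e}^{\mu L/\Lambda^-}d^2$ contributions cancel exactly, leaving
\begin{equation*}
\dot W\le -\sum_{i=1}^3\bigl(A_i\mathrm{e}^{-\mu L/\lambda_i}-\varsigma_i\bigr)\alpha_i^2(L,t)+\Big(\sum_{i=1}^3 A_iq_i^2-B+\varsigma_4\Big)\beta^2(0,t)-\bigl(\mu(1-\sigma)-\gamma\bigr)V(t)+\eta m(t).
\end{equation*}

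Next, I would invoke the hypotheses of the theorem: conditions (69)--(70) immediately kill the two quadratic boundary contributions. Provided the backstepping margin $\mu(1-\sigma)>\gamma$ holds (a standard requirement satisfied by enlarging $\mu$ on top of the stated hypotheses), the coefficient on $V$ becomes a strictly negative constant $-\delta$. Because $m\le 0$, the residual $\eta m=-\eta(-m)$ is also nonpositive, so setting $\lambda=\min\{\delta,\eta\}$ produces $\dot W\le -\delta V-\eta(-m)\le -\lambda(V-m)=-\lambda W$. Grönwall then gives $W(t)\le W(0)\mathrm{e}^{-\lambda t}$, and the norm equivalences deliver exponential decay of $\norm{\mathbf{w}}_{L^2}$.

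The main obstacle is the $\dot V$ bookkeeping itself: the integral couplings through $\mathbf{C}^{\pm}$ must be absorbed into the $-\mu V$ term through Cauchy-Schwarz (this pins down $\gamma$ and forces the lower bound on $\mu$), while the clean $d^2$ cancellation hinges on matching the constant $B\mathrm{e}^{\mu L/\Lambda^-}$ inside~\eqref{mdyna} against the one produced by integration by parts against the boundary perturbation $\beta(L,t)=d(t)$. Once that algebra is tidied up, the rest reduces to a routine Grönwall argument on $W$.
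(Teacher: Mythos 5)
Your proposal is correct and follows essentially the same route as the paper: the augmented functional $V(t)-m(t)$, the same $\dot V$ bound with cancellation of the $B\mathrm{e}^{\mu L/\Lambda^-}d^2$ terms against $\dot m$, the hypotheses killing the boundary terms, and Lemma~\ref{mbound} ($m\le 0$) absorbing the residual $\eta m$. The only cosmetic difference is that you take $\lambda=\min\{\delta,\eta\}$ where the paper instead rewrites the residual as $(\eta-\delta)m$ and imposes $\eta\ge\delta$; both yield the same Gr\"onwall estimate.
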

\begin{proof}
    We consider the following Lyapunov candidate for perturbed target system \eqref{tarpd1}-\eqref{tarpd4},
    \begin{align}
        V_{d}(t,m) = V(t) - m(t).
    \end{align}
    Taking time derivative of the Lyapunov candidate, we get:
    \begin{align}
        \Dot{V}_{d}(t,m) &\leq  B\mathrm{e}^{\frac{\mu L}{\Lambda^-}}d^2(t)-(\mu-\gamma) V - \Dot{m}(t)\nonumber\\
        & - \sum_{i=1}^3 A_i \mathrm{e}^{-\frac{\mu L}{\lambda_i}}\alpha_i^2(L,t) +  (\sum_{i=1}^3 A_i q_i^2 - B) \beta^2(0,t),    
    \end{align}
    taking into the expression of $\Dot{m}(t)$, we get:
    \begin{align}
        \Dot{V}_{d} \leq &-(\mu - \gamma) V + B\mathrm{e}^{\frac{\mu L}{\Lambda^-}}d^2(t) + \eta m + \mu \sigma V \nonumber\\
        &-B \mathrm{e}^{\frac{\mu L}{\Lambda^-}}d^2(t) + \sum_{i=1}^3 (\varsigma_i - A_i \mathrm{e}^{-\frac{\mu L}{\lambda_i}})\alpha_i^2(L,t) \nonumber\\
        &+ (\varsigma_4 + \sum_{i=1}^3 A_i q_i^2 - B) \beta^2(0,t).
    \end{align}
    Simplifying the equation, thus we have:
    \begin{align}
        \Dot{V}_{d} \leq -(\mu(1-\sigma) - \gamma){V}_{d} + (\eta - (\mu(1-\sigma) - \gamma))m.
    \end{align}
    Choosing $\eta - (\mu(1-\sigma) - \gamma) \geq 0$, we get $ \Dot{V}_{d} \leq -(\mu(1-\sigma) - \gamma){V}_{d}$.
    Using the comparison principle again and $m(0) = 0$, we have:
    \begin{align}
        \Dot{V}(t) \leq \mathrm{e}^{-(\mu(1-\sigma) - \gamma)t} V(0).
    \end{align}
    Therefore, an estimation of the original system of the $L_2$ norm can be written as
    \begin{align}
        \norm{\mathbf{w}(x,t)}^2_{L^2} \leq \frac{p_2p_4}{p_1 p_3} \mathrm{e}^{-(\mu(1-\sigma) - \gamma)t}\norm{\mathbf{w}(0,t)}^2_{L^2}
    \end{align}
    This concludes the proof of Theorem \ref{esthm}.
\end{proof}

\section{Numerical Simulation}\label{sec5}
\begin{figure}[tbp]
    \centering
    \subcaptionbox{HV \label{HVsevent_s16}}{ \includegraphics[width =0.45\linewidth]{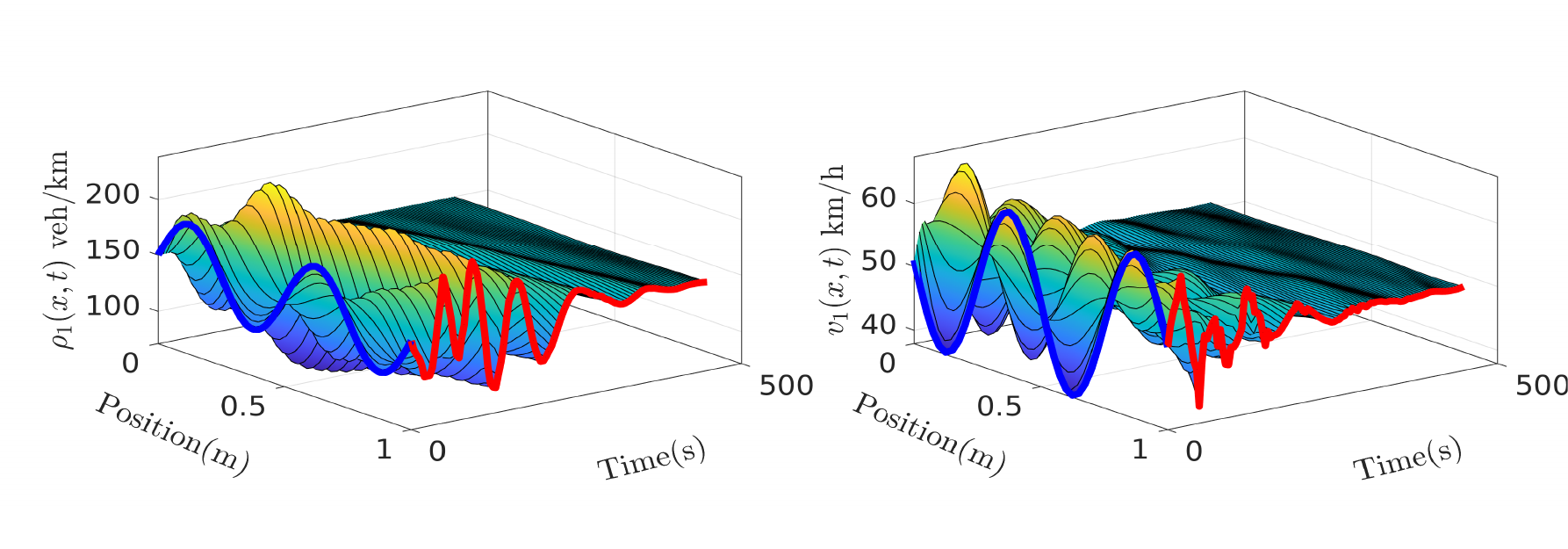}} 
    \subcaptionbox{AV \label{AVsevent_s16}}{\includegraphics[width =0.45\linewidth]{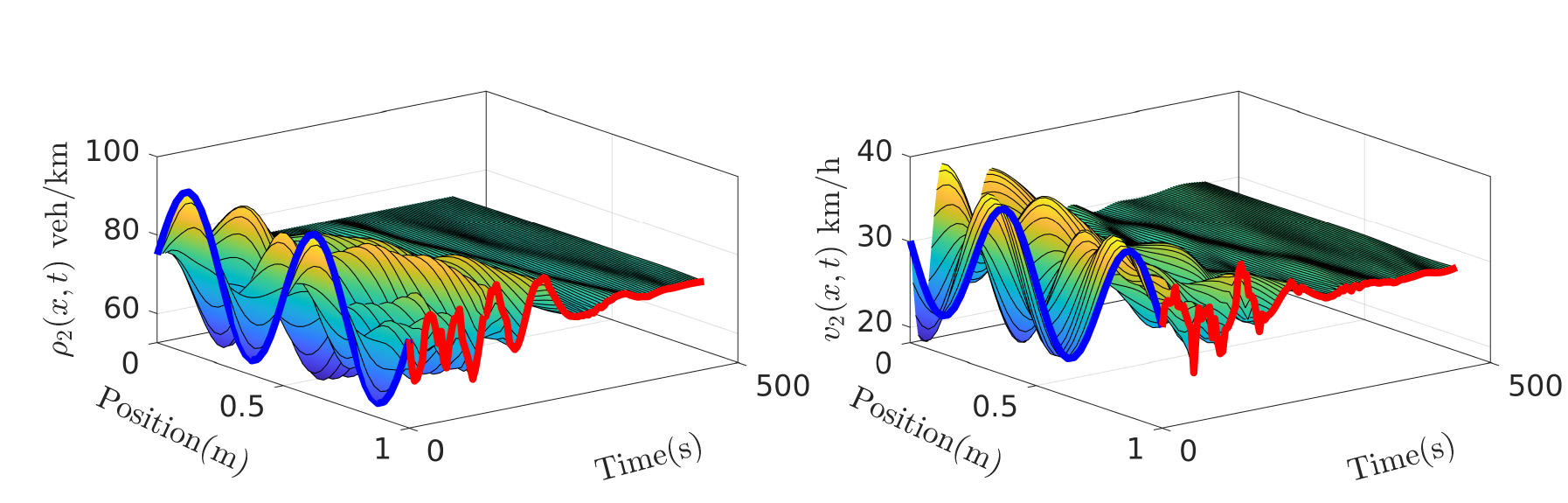}}
    \caption{The results under ETC with spacing $s_{\rm a}=16\text{m}$}
    \label{cl_event_s16}
\end{figure}
\begin{figure}[tbp]
    \centering
    \subcaptionbox{Control input \label{control_compare_s16}}{\includegraphics[width= 0.45\linewidth]{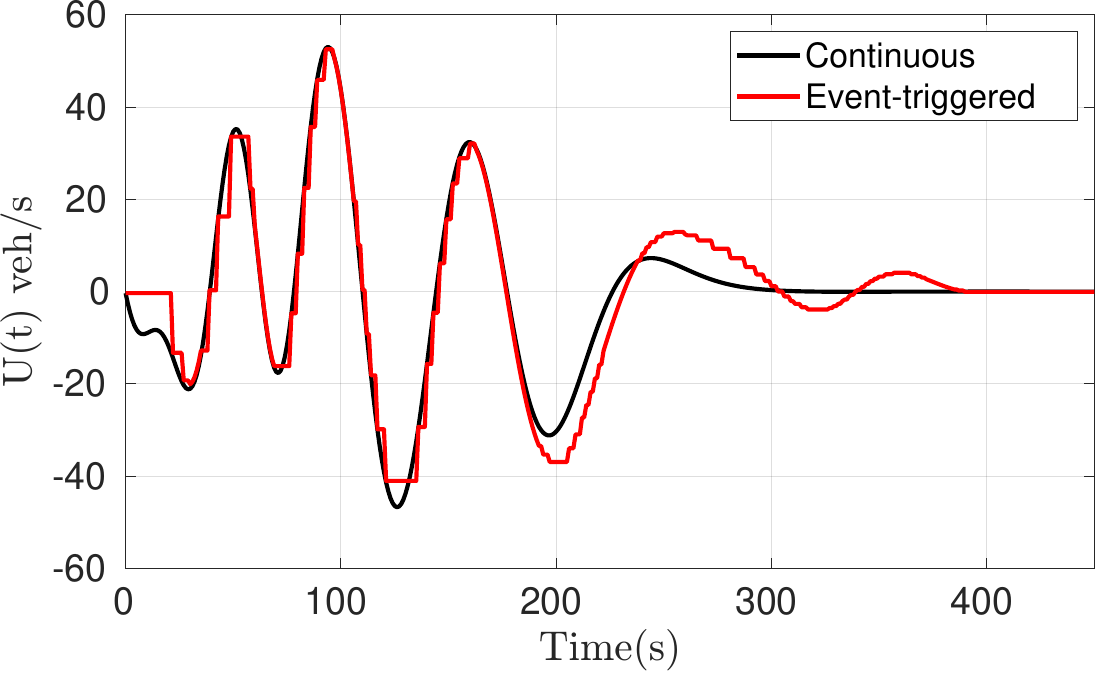}}
    \subcaptionbox{Release instants \label{releaseinterval_s16}}{\includegraphics[width= 0.45\linewidth]{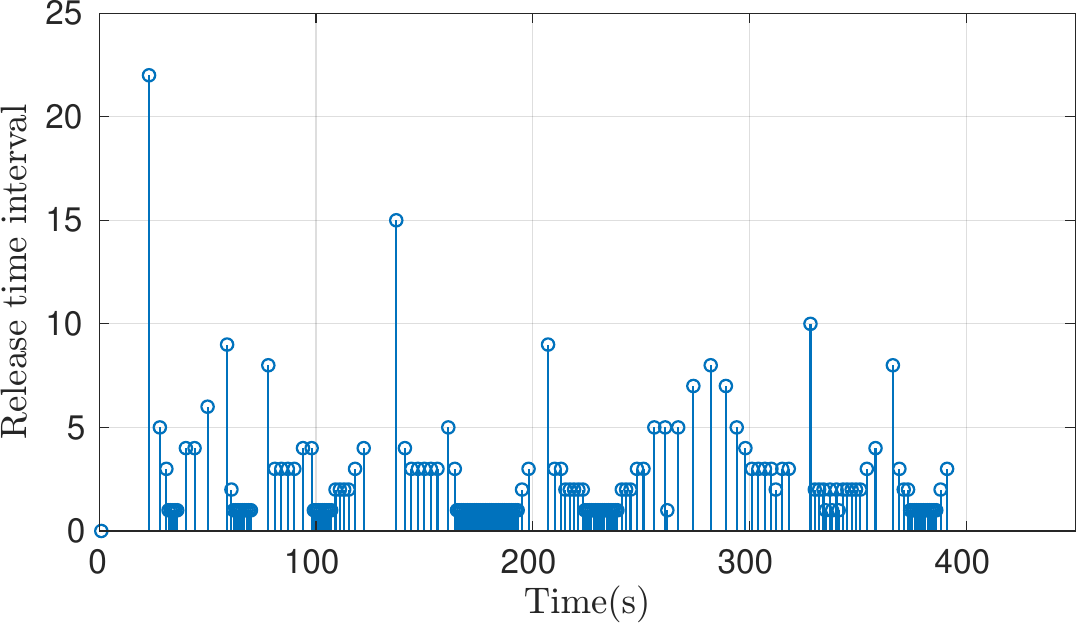}}
    \caption{The performance of ETC under $s_{\rm a}=16\text{m}$}
    \label{event-control_s16}
\end{figure}

\begin{figure}[htbp]
    \centering
    \subcaptionbox{Control input \label{control_compare}}{\includegraphics[width= 0.45\linewidth]{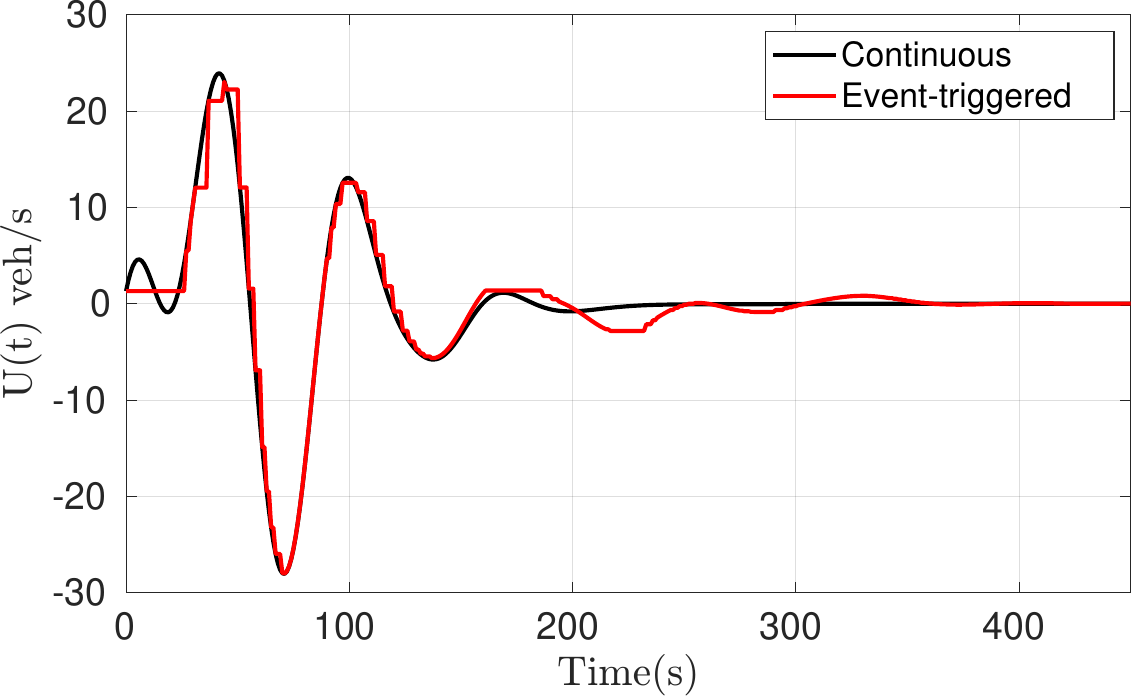}}
    \subcaptionbox{Release instants \label{releaseinterval}}{\includegraphics[width= 0.45\linewidth]{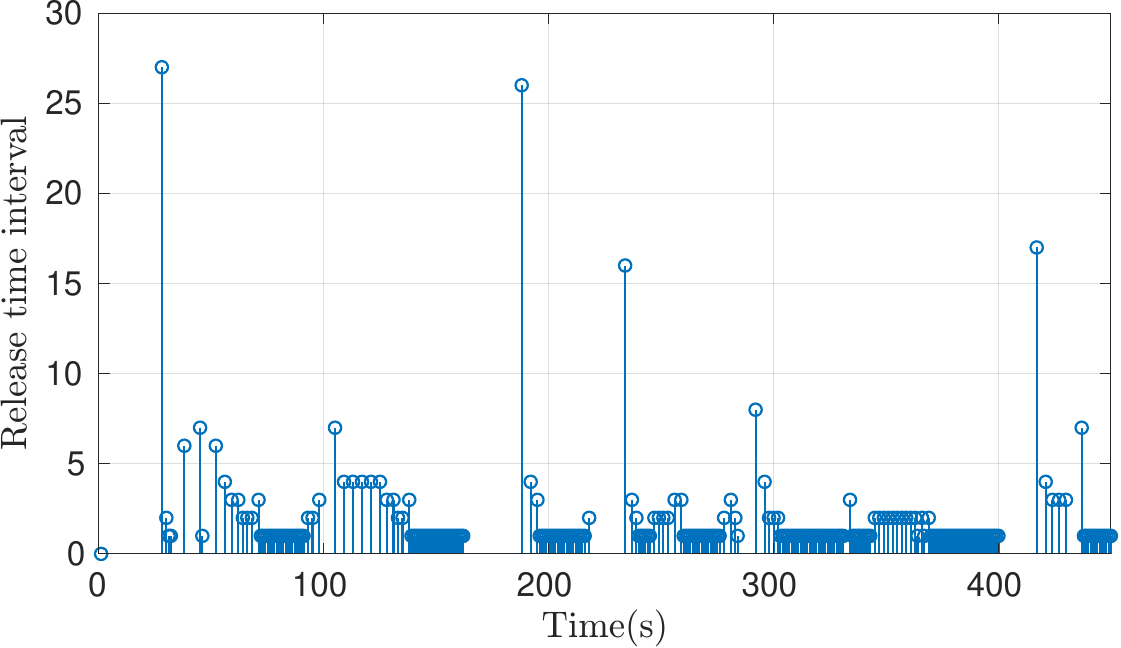}}
    \caption{The performance of ETC under $s_{\rm a}=20\text{m}$}
    \label{event-control}
\end{figure}
In this section, we provide the numerical simulation for the closed-loop system with event-triggered controller. Taking the equilibrium density as $\rho_{\rm h}^\star = 150\text{veh/km}$, $\rho_{\rm a}^\star = 75 \text{veh/km}$, such that $v_{\rm h}^\star = 29.16\text{km/h}$, $v_{\rm a}^\star = 13.32\text{veh/km}$ can be calculated by the fundamental diagram. The relaxation time is set as $\tau_{\rm h} = 30\text{s}$, $\tau_{\rm a} = 60 \text{s}$. The pressure exponent value is selected as $\gamma_{\rm h} = 2.5$, $\gamma_{\rm a} = 2$. The car-following gap are $s_{\rm h} = 5 \text{m}$, $s_{\rm a} = 16 \text{m}$. The maximum area occupancy $\overline{AO}_{\rm h} = 0.9$, $\overline{AO}_{\rm a} = 0.85$  In addition, we choose $\zeta = 8\times 10^{-3}$, $\sigma = 1\times10^{-4}$, $\eta = 0.9$, $A_1 = 2\times 10^{-2}$,  $A_2 = 3\times 10^{-3}$,  $A_3 = 4\times 10^{-3}$, $B = 9\times 10^{-3}$, and we also choose $\varrho_1 = 2\times 10^{-10}$, $\varrho_2 = 2\times 10^{-9}$, $\varrho_3 = 1.2\times 10^{-12}$, $\varsigma_4 = 0.01$, $\mu = 5\times 10^{-4}$. We run the simulation on a $L = 1000 \text{m}$ long road whose width is $6\text{m}$ and the simulation time is $450\text{s}$.

The closed-loop results of the mixed traffic system under the event-triggered controller are shown in Fig. \ref{cl_event_s16}. Fig. \ref{HVsevent_s16} represents the density and velocity of the HV while Fig. \ref{AVsevent_s16} denotes the AV. It can be found that the event-triggered controller stabilizes the mixed-autonomy traffic system. We also provide a comparison between the continuous controller using the backstepping method and the event-triggered controller in Fig. \ref{control_compare_s16}. The traffic management system does not need to update the control input at each time-step by using the event-triggered controller, therefore, the computational burden has been reduced. The triggered times and the release time interval are plotted in Fig. \ref{releaseinterval_s16}. 


We then test the different spacing settings of AV, we run the simulation under spacing of AV $s_{\rm a} = 20\text{m}$. The comparison between the continuous controller and the event-triggered controller is shown in Fig. \ref{control_compare}. The triggered times and the release time interval are plotted in Fig. \ref{releaseinterval}. 
The results show that the traffic system tends to become more congested with a larger space of the AV. The event triggered controller must need to execute more times to make the system stable.



\section{Conclusions}\label{sec6}


The event-triggered control of a mixed-autonomy traffic system is investigated. The traffic dynamics of the mixed-autonomy system are represented by an extended ARZ PDE model. A backstepping controller is designed to stabilize the system, and a dynamic ETM is defined. The event-triggered boundary controller is derived through this dynamic ETM. Lyapunov analysis is applied to show the stability results for the mixed-autonomy system with the event-triggered controller. Numerical simulations are conducted to illustrate the effects of the event-triggered controller. Future work will focus on developing an observer-based event-triggered controller for the mixed-autonomy traffic system.

\addtolength{\textheight}{-12cm}   





\bibliography{reference}

\end{document}